\newtheorem{theorem}{Theorem}
\newtheorem{lemma}{Lemma}
\newcommand{\inn}[2]{\langle #1,#2\rangle}
\newcommand{\tr}{\mathrm{Tr}}
\begin{document}

\title{Efficient and robust detection of multipartite Greenberger-Horne-Zeilinger-like states}
\date{\today}% It is always \today, today,
             %  but any date may be explicitly specified

\author{Qi Zhao}

\affiliation{Center for Quantum Information, Institute for Interdisciplinary Information Sciences, Tsinghua University, Beijing, 100084 China}
\author{Gerui Wang}
\affiliation{Department of Computer Science, University of Illinois at Urbana-Champaign, Urbana, IL 61801, USA}

\author{Xiao Yuan}
\email{xiao.yuan.ph@gmail.com}
\affiliation{Department of Materials, University of Oxford, Parks Road, Oxford OX1 3PH, United Kingdom}

\author{Xiongfeng Ma}
\email{xma@tsinghua.edu.cn}
\affiliation{Center for Quantum Information, Institute for Interdisciplinary Information Sciences, Tsinghua University, Beijing, 100084 China}

\begin{abstract}
Entanglement is a key resource for quantum information processing. A widely used tool for detecting entanglement is entanglement witness, where the measurement of the witness operator is guaranteed to be positive for all separable states and can be negative for certain entangled states. In reality, due to the exponentially increasing the Hilbert-space dimension with respective to the system size, it is very challenging to construct an efficient entanglement witness for general multipartite entangled states. For $N$-partite Greenberger-Horne-Zeilinger (GHZ)-like states, the most robust witness scheme requires $N+1$ local measurement settings and can tolerate up to $1/2$ white noise. As a comparison, the most efficient witness for GHZ-like states only needs two local measurement settings and can tolerate up to $1/3$ white noise. There is a trade-off between the realization efficiency, the number of measurement settings, and the detection robustness, the maximally tolerable white noise. In this work, we study this trade-off by proposing a family of entanglement witnesses with $k$ ($2\le k\le N+1$) local measurement settings. Considering symmetric local measurements, we calculate the maximal tolerable noise for any given number of measurement settings. Consequently, we design the optimal witness with a minimal number of settings for any given level of white noise. Our theoretical analysis can be applied to other multipartite entangled states with a strong symmetry. Our witnesses can be easily implemented in experiment and applied in practical multipartite entanglement detection under different noise conditions.
\end{abstract}

%\pacs{}% PACS, the Physics and Astronomy
                             % Classification Scheme.
%\keywords{}%Use showkeys class option if keyword
                              %display desired
\maketitle

\section{Introduction}
Quantum mechanics exhibits a variety of counter-intuitive phenomena. Among them, quantum entanglement , which enables the ``spooky actions" between two space-likely separated parties, has been intensively studied \cite{Horodecki09}. It is widely believed that entanglement plays an essential role in many quantum information processing tasks, such as quantum computation \cite{Shor97}, quantum key distribution  \cite{bb84,PhysRevLett.67.661}, quantum teleportation \cite{Bennett1993Teleport}, and Bell's inequality test \cite{bell,greenberger1990bell,ansmann2009violation}. A bipartite state is entangled when it cannot be represented as a mixture of product states $\rho_A^i\otimes\rho_B^i$, that is, $\rho=\sum_i p_i \rho_A^i\otimes\rho_B^i$. This definition can be naturally extended to the multipartite scenario \cite{RevModPhys.81.865,guhne2009entanglement}. An $N$-party state is genuinely entangled when it cannot be expressed as a mixture of bi-separable states $\rho_{\{A\}}^i\otimes\rho_{\{B\}}^i$ such that $\rho=\sum_i p_i \rho_{\{A\}}^i\otimes\rho_{\{B\}}^i$ for any bipartition $\{A\}$ and $\{B\}$ of $N$ parties.

Great efforts have been made to theoretically \cite{RevModPhys.81.865,guhne2009entanglement,PhysRevLett.94.060501} and experimentally \cite{PhysRevLett.106.130506, PhysRevLett.117.210502, PhysRevLett.112.155304,chen2017observation,PhysRevLett.120.260502} witness the genuine multipartite entanglement. In general, entanglement detection of an arbitrary unknown state is a challenging task. To determine whether a state is entangled or not, the entanglement witness (EW) method is widely employed, taking advantage of the convexity of the set of separable states \cite{RevModPhys.81.865,horodecki1996separability}. A Hermitian witness operator $\mathcal{W}$ is called an EW when for any separable state $\sigma\in \mathcal{S}_{sep}$, $\mathrm{Tr}(\mathcal{W}\sigma)\ge 0$, and for certain entangled state $\rho$, $\mathrm{Tr}(\mathcal{W}\rho)<0$. In general, a witness $\mathcal{W}$ can be represented as
\begin{equation}\label{generalwitness}
\begin{aligned}
\mathcal{W}=\alpha \mathbb{I}- \mathcal{O},
\end{aligned}
\end{equation}
where $\mathbb{I}$ is the identity operator, $\mathcal{O}$ is a Hermitian operator, and the coefficient $\alpha$ is determined by
\begin{equation}\label{eq:defalpha}
\begin{aligned}
\alpha=\sup\{\mathrm{Tr}(\mathcal{O}\sigma): \sigma\in \mathcal{S}_{sep}\}.
\end{aligned}
\end{equation}
Choosing the observable $\mathcal{O}$ and determining $\alpha$ are two important but difficult tasks. A witness can be constructed by various methods, for instance, positive partial transpose criterion \cite{peres1996separability} and computable cross norm or realignment criterion  \cite{rudolph2005further,PhysRevLett.95.150504}. We refer to \cite{RevModPhys.81.865,guhne2009entanglement,RevModPhys.82.277,plbnio2007introduction} for detailed review of quantum entanglement detection.

In this work, we mainly focus on the genuine entanglement detection of multipartite Greenberger-Horne-Zeilinger (GHZ)-like states \cite{greenberger1990bell}, which has been employed in many quantum information processing tasks, such as quantum error correction \cite{nielsen2002quantum}, multipartite nonlocality tests \cite{scarani2001spectral}, and multipartite quantum key agreement \cite{chen2004multi}. For an $N$-partite GHZ state \cite{greenberger1990bell},
\begin{equation}\label{eq:NGHZ}
\begin{aligned}
\ket{GHZ}=\frac{1}{\sqrt{2}}\left(\ket{0}^{\otimes N}+\ket{1}^{\otimes N}\right),
\end{aligned}
\end{equation}
a conventional witness is to take $\mathcal{O}=\ket{GHZ}\bra{GHZ}$ and construct a witness following Eq.~\eqref{generalwitness},
\begin{equation} \label{eq:NEW0}
\begin{aligned}
\mathcal{W}=\frac{1}{2} \mathbb{I}- \ket{GHZ}\bra{GHZ}.
\end{aligned}
\end{equation}
Such a witness is widely used in experimental multipartite entangled states detection \cite{PhysRevLett.117.210502,chen2017observation, PhysRevX.8.021072}, especially for a practical GHZ state that is mixed with white noises,
\begin{equation} \label{eq:GHZnoise}
\begin{aligned}
\rho = (1-p)\ket{GHZ}\bra{GHZ}+ p2^{-N}\mathbb{I}.
\end{aligned}
\end{equation}
Such a noisy GHZ state is genuinely entangled when $p\lesssim 1/2$ for a large $N$.

In practice, a multipartite EW is often implemented by decomposing it into several local measurements, and each party performs the corresponding measurements. The witness shown in Eq.~\eqref{eq:NEW0} can be realized by measuring $N+1$ local measurement settings (LMS) \cite{guhne2007toolbox}. In the meantime, there also exists a witness that is based on the stabilizer of the GHZ state and only requires two LMS, for any number of parties $N$ \cite{PhysRevLett.94.060501,PhysRevLett.117.210504}. However, this witness can only detect the genuine entanglement of noisy GHZ states defined in Eq.~\eqref{eq:GHZnoise} with $p\lesssim 1/3$ for a large $N$.

In experiment, under the requirement of a successful detection of entanglement, a smaller number of LMSs is highly desired. This is because a LMS generally requires at least thousand rounds of experiments in order to obtain an accurate estimation of the expectation value with relatively small statistical fluctuations. Thus, the witness requiring $N+1$ LMSs will take more efforts, especially for a large $N$.
When the noise level is low, one can simply implement the witness with two LMSs. However, it may be experimentally hard to keep the noise level low. When the noise level increases to a certain amount, the witness with two LMSs is more likely to fail and additional measurements must be employed. However, besides the two extreme witnesses with two and $N+1$ LMSs, the witnesses with $k$ ($2<k\le N$) LMSs still lacks studying. These intermediate witnesses will provide alternative options for finding the efficient and valid witness with a given noise level, which is instructive for experimental entanglement detection of GHZ-like states. In a sense, there exists a trade-off between the detection robustness, i.e., the maximal tolerable white noise, and the realization efficiency, i.e., the number of LMSs. It is also theoretically interesting to study such a trade-off for multipartite entanglement detection.

In this work, we first propose a systematic method of constructing a family of witnesses with $k$ ($2\le k\le N+1$) LMSs in Section \ref{Sc:kfamily}. We show how to calculate the witness, specifically, determining the coefficient $\alpha$ as defined in Eq.~\eqref{generalwitness}, by converting the problem into solving a simple maximization problem. In Section \ref{Sec:robust}, we numerically solve the maximization and study the optimal witness under different noise levels. That is, we explicitly give the witness that have the fewest LMSs and are able to detect noisy $N$-partite GHZ states with certain amount of noise levels for $N=5$, $10$, and $15$. We verify the trade-off between detection robustness and realization efficiency. In Section \ref{Sec:Conclusion}, we conclude our results and discuss possible applications.

\section{EW family with $k$ LMS} \label{Sc:kfamily}
In this section, we first review two EWs that requires 2 and $N+1$ LMSs. Then, we propose a family of EWs that contain $k$ ($2\le k\le N+1$) settings. We formulate the key steps in constructing the witnesses in the first subsection and give the detailed derivation for witnesses in the next.

\subsection{Constructing EW family}
The simplest multipartite EW consists of 2 LMSs \cite{PhysRevLett.94.060501}, only utilizing $\sigma_z$ and $\sigma_x$ measurements,
\begin{equation}\label{witness1}
\begin{aligned}
\mathcal{W}_2=\frac{1}{2} \mathbb{I}- \frac{1}{2}M_z-\frac{1}{4}M_x,
\end{aligned}
\end{equation}
where $M_z=\ket{0}\bra{0}^{\otimes{N}}+\ket{1}\bra{1}^{\otimes{N}}$ and $M_x=\sigma_x^{\otimes{N}}$. Here, $\sigma_x=\ket{0}\bra{1}+\ket{1}\bra{0}$, $\sigma_y=-i\ket{0}\bra{1}+i\ket{1}\bra{0}$, and $\sigma_z=\ket{0}\bra{0}-\ket{1}\bra{1}$ are the Pauli matrices. Moreover, this witness with a modified coefficient, $c M_z + M_x$ can detect not only genuine entanglement but also the separability of multipartite GHZ-like states \cite{PhysRevX.8.021072}.

In reality, one can also decompose the widely used multipartite EW in Eq.~\eqref{eq:NEW0} into several local observables and each party performs the corresponding measurements. With $N+1$ local measurements, the GHZ state is decomposed differently \cite{guhne2007toolbox} and hence,
\begin{equation}\label{decomposition1}
\mathcal{W}_{N+1}=\frac{1}{2} \mathbb{I}-\frac{1}{2}M_z-\frac{1}{2N}\sum_{j=0}^{N-1} (-1)^j M_{\theta_j}^{\otimes{N}},
%\ket{GHZ}\bra{GHZ}=\frac{1}{2}M_z+\frac{1}{2N}\sum_{j=0}^{N-1} (-1)^j M_{\theta_j}^{\otimes{N}},
\end{equation}
where $\theta_j=\pi j/N$ and $M_{\theta_j}=\cos(\theta_j)\sigma_x+\sin(\theta_j)\sigma_y$. This decomposed EW involves a $\sigma_z$ basis measurement and $N$ measurements in the $\sigma_x$-$\sigma_y$ plane.

Note that the main difference between these two witnesses, Eq.~\eqref{witness1} and \eqref{decomposition1}, lies on the choice of LMSs and the corresponding coefficients. Inspired by this decomposition, instead of performing all the local obervables $M_{\theta_j}$ ($j=0,1\dots,N-1$), we only choose a subset of the measurements in the $\sigma_x$-$\sigma_y$ plane, and keep the observables in the $\sigma_z$ basis, i.e. $\{\ket{0}\bra{0}^{\otimes{N}}, \ket{1}\bra{1}^{\otimes{N}}\}$. Let $S\subset \{0,1,\dots,N-1\}$ be the set of chosen indices from the total $N$ indices and $|S|$ be the number of measurements in the $\sigma_x$-$\sigma_y$ plane. Similar to the decomposition in Eq.~\eqref{decomposition1}, we can define a Hermitian operator $\mathcal M_S$ as
\begin{equation}\label{measure}
\mathcal M_S=\ket{0}\bra{0}^{\otimes{N}}+\ket{1}\bra{1}^{\otimes{N}}+\frac{1}{C}\sum_{j\in S} (-1)^j M_{\theta_j}^{\otimes{N}},
\end{equation}
where $C$ is a weighting parameter which satisfies $|S|\le C\le 2|S|$ and can be optimized afterwards. Then, we can construct a generic witness $\mathcal W$ for an $N$-partite qubit system,
\begin{equation}
\begin{aligned}\label{witness3}
\mathcal{W}_S= \alpha_S \mathbb{I}- \mathcal M_S,
\end{aligned}
\end{equation}
where the parameter $\alpha_S$ depends on the set of the chosen measurement settings $S$ and coefficient $C$. The key challenge of constructing the genuine entanglement witness $\mathcal W$ is to calculate $\alpha_S$, Eq.~\eqref{eq:defalpha}, which, in general, is computationally hard for an arbitrary observable $\mathcal M_S$ \cite{guhne2009entanglement,doherty2004complete,gurvits2003classical}. In the following, we first formulate the mathematical problem and then present a method to efficiently estimate  an upper bound $\alpha_S^u$ of $\alpha_S$. Note that $\mathcal{W}_S= \alpha_S^u \mathbb{I}- \mathcal M_S$ is still a valid EW though sacrificing the performance of EW.

In order to evaluate Eq.~\eqref{eq:defalpha}, without losing generality, we only need to consider
pure bipartite separable states as general separable states can also be expressed as a mixture of pure bipartite separable states. Denote $\{k\}\{N-k\}$ to be a partition of the $N$ parties into a subset of $k$ parties and the complementary subset of $N-k$ parties. Denote $\mathcal H_{\{k\}}$ to be the Hilbert subspace of $k$ qubits. The invariance of a Hermitian operator $\mathcal M_S$ with respect to an arbitrary permutation of subsystem Hilbert spaces implies that to determine the biseparable bound, the detailed indexes falling into either subset $\{k\}$ or $\{N-k\}$ are irrelevant. Consider that a separable pure state $\ket{\phi}$ can be partitioned to $\{k\}\{N-k\}$ as $\ket{\phi}=\ket{\phi_A}\otimes \ket{\phi_B}$, where $\ket{\phi_A}\in \mathcal H_{\{k\}}$ and $\ket{\phi_B}\in \mathcal H_{\{N-k\}}$, and define the corresponding Hilbert subspace $\mathcal H_{\{k\}\{N-k\}}$. Define $f_{\{k\}\{N-k\}}$ to be the maximal quantum value of $\tr(\mathcal M_S\ket{\phi}\bra{\phi})$ over all biseparable $N$-qubit states in this partition,
\begin{equation}
\begin{aligned}
f_{\{k\}\{N-k\}}&=\max  \limits_{\ket{\phi}\in \mathcal H_{\{k\}\{N-k\}}}\ \tr(\mathcal M_S \ket{\phi}\bra{\phi}).
\end{aligned}
\end{equation}
Here, $f_{\{k\}\{N-k\}}$ can be obtained by performing the optimal measurements $\ket{\phi_A}\bra{\phi_A}$ on the $\mathcal H_{\{k\}} $ subsystem and finding the maximum eigenvalue of the rest matrix \cite{sperling2013multipartite},
\begin{equation}
\begin{aligned}
f_{\{k\}\{N-k\}}&=\max  \limits_{\ket{\phi_A}\in \mathcal H_{\{k\}}} \mathrm{eig}_{\max} \tr_{\{k\}}(\mathcal M_S\ket{\phi_A}\bra{\phi_A}),
\end{aligned}
\end{equation}
where $\mathrm{eig}_{\max}$ is the maximum eigenvalue of a matrix. The coefficient $\alpha_S$ defined in Eq.~\eqref{eq:defalpha} and \eqref{witness3} is the maximal value of $\tr(\mathcal M_S\sigma)$ over all biseparable $n$-qubit states including all the different bipartitions. Therefore, we have
\begin{equation} \label{fupp}
\alpha_S=\max_k f_{\{k\}\{N-k\}}.
\end{equation}
For each subset $S\subset \{0,1,\dots,N-1\}$ and parameter $C$, there is no straightforward way to obtain the value of $\alpha_S$ for the characterization of separable states is a computationally difficult problem \cite{doherty2004complete,gurvits2003classical}.

Generally, an upper bound of $\alpha_S$ can be achieved by semidefinite programming with symmetric extension \cite{doherty2004complete}. However, either such a general numerical procedure is inaccurate with exponential cost or it costs more time to give an accurate estimation for a large $N$. Furthermore, as we also need to search the optimal set of $S$, it makes the semidefinite programming method highly impractical. To resolve this problem, we propose a method to efficiently upper bound  $\alpha_S$ with computational cost $O(N)$.

%  Ref[30]: In this paper
%we discuss a set of separability criteria that also have this
%property; they all scale polynomially with the Hilbert space
%dimension and perform well in practice.
%Thus it is exponential function of N

\subsection{Estimating the upper bound for $\alpha_S$}
To estimate $\alpha_S$ defined in Eq.~\eqref{fupp}, we first give an analytical reduction and then  apply the numerical algorithm to compute the upper bound of $\alpha_S$. This numerical algorithm is efficient, which makes the optimisation of $S$ and $C$ possible in order to construct an optimal EW.

Define $M_{\mathrm{rest}}=\tr_{\{k\}}(\mathcal M_S\ket{\phi_A}\bra{\phi_A})$ and then
\begin{equation}
\begin{aligned}
M_{\mathrm{rest}} = & x\cdot\ket{0}\bra{0}^{\otimes{(N-k)}}+y\cdot\ket{1}\bra{1}^{\otimes{(N-k)}}
\\&+\frac{1}{C}\sum_{j\in S}(-1)^j\tr_{\{k\}}( M_{\theta_j}^{\otimes{N}}\ket{\phi_A}\bra{\phi_A}),
\end{aligned}
\end{equation}
where $x= \bra{\phi_A} (\ket{0}\bra{0}^{\otimes{k}})\ket{\phi_A}$, $y= \bra{\phi_A} (\ket{1}\bra{1}^{\otimes{k}})\ket{\phi_A}$  due to Eq.~\eqref{fupp}. This matrix $M_{\mathrm{rest}}$ is obtained by the Hermitian operator $\mathcal M_S$ after measuring $\ket{\phi_A}\bra{\phi_A}$ on first system $\{k\}$. Thus in order to estimate $\alpha_S$, we need to focus on the maximal eigenvalue of $M_{\mathrm{rest}}$.

We rewrite the expression of $M_{\mathrm{rest}}$ in a clearer form. Let function $l_1(b)$ for $b\in\{0,1\}^l$ be the bit count function, i.e., $l_1(b)=\sum_{i=1}^l b_i$. Then for a fixed $j\in S$, we have
\begin{equation}
\begin{aligned}
&\tr_{\{k\}}( M_{\theta_j}^{\otimes{N}}\ket{\phi_A}\bra{\phi_A})\\
&=z_j\cdot M_{\theta_j}^{\otimes{(N-k)}}\\
&=z_j\cdot \sum_{b\in\{0,1\}^{(N-k)}} e^{\mathrm i[l_1(b)-l_1(\neg b)]\theta_j}\ket{b}\bra{\neg b}\\
&=z_j\cdot \sum_{b\in\{0,1\}^{(N-k)}} e^{\mathrm i[2\cdot l_1(b)-(N-k)]\theta_j}\ket{b}\bra{\neg b}\\
&=z_j\cdot \left(
  \begin{array}{ccc}
  &&e^{-\mathrm i(N-k)\theta_j}\\
  & \iddots &\\
  e^{\mathrm i(N-k)\theta_j}& &
  \end{array}
\right),
\end{aligned}
\end{equation}
where $z_j= \bra{\phi_A} M_{\theta_j}^{\otimes{k}} \ket{\phi_A}$, and ``$\neg$'' means bitwise negation. The anti-diagonal elements are of the form $e^{\mathrm i t\theta_j}$ where $t=2\cdot l_1(b)-(N-k)$ whose range is $[-(N-k),N-k]$.

Combining all the measurement settings, we have
\begin{equation} \label{eq:Mrest}
M_{\mathrm{rest}}
=\left(
  \begin{array}{ccccc}
	  x & & &  &F_{z,k-N}\\
        & 0& & F_{z,k-N+1}&\\

        &  & \ddots &   &       \\
       & F_{z,N-k-1}& & 0 &\\
	  F_{z,N-k}& & & &y\\
  \end{array}
\right),
\end{equation}
where $F_{z,t}=\frac{1}{C}\sum_{j\in S} (-1)^j z_j e^{\mathrm it\theta_j}$ is defined as the anti-diagonal elements. It is not hard to see that $|F_{z,t}|\le |S|/C\le 1$.% can be represented as $F_{z,t},~t\in [-(N-k),N-k]$.

\begin{lemma}
The maximal eigenvalue of $M_{\mathrm{rest}}$, defined in Eq.~\eqref{eq:Mrest}, is given by,
\begin{equation} \label{eq:maxeig}
\begin{aligned}
\lambda_{N-k}=\frac{x+y}{2}+\sqrt{|F_{z,N-k}|^2+\frac{(x-y)^2}{4}}.
\end{aligned}
\end{equation}
\end{lemma}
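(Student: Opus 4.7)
The plan is to exploit the sparse structure of $M_{\mathrm{rest}}$. Apart from the two corner entries $x$ and $y$, the only nonzero entries sit on the anti-diagonal, so the matrix decomposes neatly in the basis that pairs each computational basis vector $\ket{b}$ with its bit-flip $\ket{\neg b}$. Reordering rows and columns accordingly turns $M_{\mathrm{rest}}$ into a block-diagonal matrix with $2^{N-k-1}$ independent $2\times 2$ blocks, and its spectrum is the union of the spectra of these blocks.

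For the pair $(\ket{0^{N-k}}, \ket{1^{N-k}})$ the restricted block reads
\begin{equation*}
\begin{pmatrix} x & \overline{F_{z,N-k}} \\ F_{z,N-k} & y \end{pmatrix},
\end{equation*}
where Hermiticity combined with the reality of $z_j = \bra{\phi_A}M_{\theta_j}^{\otimes k}\ket{\phi_A}$ ensures $F_{z,-t} = \overline{F_{z,t}}$. Every other pair $(\ket{b}, \ket{\neg b})$ with $t = 2l_1(b) - (N-k) \in \{-(N-k)+2,\ldots,(N-k)-2\}$ yields a block with vanishing diagonal and off-diagonal entries $F_{z,t}$ and $\overline{F_{z,t}}$. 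Applying the standard $2\times 2$ Hermitian eigenvalue formula gives corner eigenvalues $(x+y)/2 \pm \sqrt{|F_{z,N-k}|^2 + (x-y)^2/4}$ and noncorner eigenvalues $\pm |F_{z,t}|$, and the proposed $\lambda_{N-k}$ arises immediately as the larger corner eigenvalue.

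The main obstacle is showing that this corner eigenvalue is in fact the global maximum, i.e., that $(x+y)/2 + \sqrt{|F_{z,N-k}|^2 + (x-y)^2/4} \ge |F_{z,t}|$ for every intermediate $t$. Since $x, y \ge 0$ the inequality $\lambda_{N-k} \ge |F_{z,N-k}|$ is immediate, so the remaining difficulty is to compare $|F_{z,N-k}|$ (augmented by the corner contribution) with the largest intermediate $|F_{z,t}|$. I would attack this through the specific phase structure of the anti-diagonal entries: the evenly spaced $\theta_j = \pi j/N$ together with the alternating signs $(-1)^j$ tend to align constructively at the extremes $t = \pm(N-k)$, which should make $|F_{z,N-k}|$ dominant. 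A careful trigonometric estimate of $|F_{z,t}|^2 = C^{-2}\sum_{j,j'}(-1)^{j+j'} z_j z_{j'}\cos\bigl(t(\theta_j-\theta_{j'})\bigr)$, combined with a monotonicity argument in $|t|$, is the most plausible route to close the comparison and so complete the proof.
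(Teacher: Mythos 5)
Your block decomposition is exactly the paper's first step: reorder the basis into pairs $(\ket{b},\ket{\neg b})$, obtain one corner block with diagonal $x,y$ and off-diagonals $F_{z,\pm(N-k)}$, and traceless blocks with off-diagonals $F_{z,\pm t}$ for the intermediate $t$, yielding eigenvalues $\tfrac{x+y}{2}\pm\sqrt{|F_{z,N-k}|^2+\tfrac{(x-y)^2}{4}}$ and $\pm|F_{z,t}|$. Up to here you agree with the paper.

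The gap is in the final comparison, which you correctly identify as the main obstacle but do not actually close: you only propose ``a careful trigonometric estimate'' and ``a monotonicity argument in $|t|$'' as a plausible route. That route is unlikely to succeed, because the pointwise inequality $\lambda_{N-k}\ge|F_{z,t}|$ for a \emph{fixed} $\ket{\phi_A}$ need not hold. The real coefficients $z_j=\bra{\phi_A}M_{\theta_j}^{\otimes k}\ket{\phi_A}$ can have arbitrary signs and magnitudes depending on $\ket{\phi_A}$, so there is no reason for $|F_{z,t}|=\tfrac{1}{C}\bigl|\sum_{j\in S}(-1)^j z_j e^{\mathrm{i}t\theta_j}\bigr|$ to be maximized at $t=\pm(N-k)$; e.g.\ with $|S|=2$ and $x=y=0$ one can arrange the relative phase $t(\theta_{j_1}-\theta_{j_2})$ to interfere constructively at an intermediate $t$ and destructively at $t=N-k$. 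The paper avoids this entirely: since $|z_j|\le 1$ and $C\ge|S|$, every intermediate eigenvalue obeys the uniform bound $|F_{z,t}|\le|S|/C\le 1$, while choosing $\ket{\phi_A}$ with $x=1$, $y=0$ gives $\lambda_{N-k}\ge\tfrac12+\tfrac12=1$. Hence in the maximization over $\ket{\phi_A}$ (which is all that is needed for $\alpha_S$, and is what the lemma is actually used for), the intermediate branch can never exceed $\max_{\ket{\phi_A}}\lambda_{N-k}$ and may simply be discarded. To repair your argument, replace the attempted pointwise domination by this uniform-bound comparison, and note that the lemma should be read as a statement about the subsequent optimization over $\ket{\phi_A}$ rather than about the spectrum of $M_{\mathrm{rest}}$ for each fixed $\ket{\phi_A}$.
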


\begin{proof}
In order to calculate the eigenvalues of $M_{\mathrm{rest}}$, we perform elementary transformations on this matrix. There exists a sequence of elementary matrices $U_1U_2\cdots U_r$ such that $M_{\mathrm{rest}}'=(U_1U_2\cdots U_r)^{-1}M_{\mathrm{rest}}U_1U_2\cdots U_r$ and
\begin{equation}
M_{\mathrm{rest}}'=\left(
  \begin{array}{ccccc}
    x & F_{z,k-N} &&&\\
 F_{z,N-k}& y&&&\\
    &    & \ddots &&       \\
    &    & & 0&  F_{z,-t}     \\
    &    & &F_{z,t} &         0 \\
 \end{array}
\right).
\end{equation}
The diagonal elements of $M_{\mathrm{rest}}'$ are $2\times 2$ sub-matrices, in the form of
\begin{align*}
	\left(\begin{array}{cc}x &F_{z,k-N}\\ F_{z,N-k} &y\end{array}\right)\text{~ or ~}
		\left(\begin{array}{cc}0 & F_{z,-t}\\ F_{z,t} &0\end{array}\right).
\end{align*}
Then the eigenvalues of these matrices are
\begin{equation*}
\frac{x+y}{2}\pm\sqrt{|F_{z,N-k}|^2+\frac{(x-y)^2}{4}}\text{~ and ~}\pm|F_{z,t}|,
\end{equation*}
respectively, where $t\in [-(N-k-2),N-k-2]$. Since we only care about the maximal eigenvalue, denote
\begin{equation*}
\lambda_{N-k}=\frac{x+y}{2}+\sqrt{|F_{z,N-k}|^2+\frac{(x-y)^2}{4}}\text{~ and ~}\lambda_t=|F_{z,t}|.
\end{equation*} Moreover, $\lambda_t=|F_{z,t}|\le 1$ whereas if we pick $\phi_A$ such that $x=1$ and $y=0$, then
\begin{equation} \label{eq:maxlambda}
\begin{aligned}
\max\limits_{\ket{\phi_A}\in \mathcal H_{\{k\}}} \lambda_{N-k}\ge \frac{x+y}{2}+\sqrt{\frac{(x-y)^2}{4}}=1.
\end{aligned}
\end{equation}
So when we do the searching for the maximal eigenvalue, we can discard $\lambda_t$ and focus on $\lambda_{N-k}$.
\end{proof}

While the explicit expression for $\lambda_{N-k}$ is very complex, and hence instead of directly obtaining it, we find the upper bound for it in the following theorem.
\begin{theorem}
The coefficient $\alpha_S$ defined in Eq.~\eqref{fupp} can be upper bound by function $\lambda^u$,
\[\alpha_S\le \alpha_S^u=\max_k\max_{a_1,a_n} \lambda^u(a_1,a_n,k),\]
where %the function $\lambda^u(a_1,a_n,k)$ is
\begin{equation}\label{eq:}
\begin{aligned}
\lambda^u(a_1,a_n,k)=\frac{x+y}{2}+\sqrt{(F_{z,N-k}^u)^2+\frac{(x-y)^2}{4}}.
\end{aligned}
\end{equation}
Here $F_{z,N-k}=\frac{1}{C}\sum_{j\in S} (-1)^j z_j e^{\mathrm i(N-k)\theta_j}$ is defined below Eq.~\eqref{eq:Mrest} and the maximization is over all possible partition number $k$ and measurement coefficients $a_1,a_n$
\end{theorem}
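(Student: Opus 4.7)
The plan is to reduce the optimization over an arbitrary $k$-qubit state $\ket{\phi_A}$ appearing in the Lemma to an optimization over just two real parameters $a_1$ and $a_n$, corresponding to the amplitudes on the corner basis states $\ket{0}^{\otimes k}$ and $\ket{1}^{\otimes k}$. Starting from the Lemma's closed form $\lambda_{N-k}=\tfrac{x+y}{2}+\sqrt{|F_{z,N-k}|^2+(x-y)^2/4}$, I would bound each $\ket{\phi_A}$-dependent quantity ($x$, $y$, and $|F_{z,N-k}|$) by an expression depending only on $a_1,a_n,k$ for fixed $N,S,C$. A short partial-derivative check shows that $\lambda_{N-k}$ is monotone nondecreasing in each of $x$, $y$, and $|F_{z,N-k}|$ on the admissible range, so the three individual bounds lift directly to an upper bound on $\lambda_{N-k}$ itself.

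To carry out the reduction, expand $\ket{\phi_A}=\sum_{b\in\{0,1\}^k}\alpha_b\ket{b}$ and set $a_1=|\alpha_{0^k}|$, $a_n=|\alpha_{1^k}|$, so that $x=a_1^2$ and $y=a_n^2$ exactly. Expanding $M_{\theta_j}^{\otimes k}$ in the same way as in the derivation of Eq.~\eqref{eq:Mrest} gives $z_j=\sum_b\alpha_b^*\alpha_{\neg b}e^{\mathrm{i}[2l_1(b)-k]\theta_j}$, and interchanging the order of summation yields
\[ F_{z,N-k}=\frac{1}{C}\sum_b\alpha_b^*\alpha_{\neg b}\Phi_b,\qquad \Phi_b=\sum_{j\in S}(-1)^je^{\mathrm{i}[2l_1(b)+N-2k]\theta_j}. \]
I would then split this sum into the two corner contributions ($b=0^k$ and $b=1^k$) and a bulk remainder. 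Using $(-1)^j=e^{\mathrm{i}N\theta_j}$, the corner phases $\Phi_{0^k},\Phi_{1^k}$ collapse to explicit $S,k,N$-dependent constants; in particular $\Phi_{1^k}=|S|$. After choosing the phases of $\alpha_{0^k},\alpha_{1^k}$ to align the two corner terms, the corner contribution to $|F_{z,N-k}|$ is bounded by $a_1 a_n(|S|+|\Phi_{0^k}|)/C$. For the bulk, I would apply the uniform estimate $|\Phi_b|\le|S|$ together with $|\alpha_b\alpha_{\neg b}|\le\tfrac12(|\alpha_b|^2+|\alpha_{\neg b}|^2)$ and the normalization $\sum_{b}|\alpha_b|^2=1$; because $b\mapsto\neg b$ is a bijection on non-corner strings, this gives a bulk bound of $|S|(1-a_1^2-a_n^2)/C$. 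Summing the two pieces produces the desired $|F_{z,N-k}|\le F_{z,N-k}^u(a_1,a_n,k)$.

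Substituting these bounds into the Lemma's formula and taking the outer $\max_k$ of Eq.~\eqref{fupp} then yields $\alpha_S\le\alpha_S^u=\max_k\max_{a_1,a_n}\lambda^u(a_1,a_n,k)$, which is the claim. The main difficulty I anticipate is that the bulk estimate is loose: the phases $\Phi_b$ genuinely oscillate and cancel substantially over the $2^k-2$ non-corner strings, but tracking those cancellations would require summing over exponentially many amplitudes and defeat the goal of an $O(N)$ procedure. The virtue of the crude $|\Phi_b|\le|S|$ bound is that it decouples the bulk contribution from $k$ and $N$ almost entirely, reducing the construction of $\lambda^u$ to a two-variable problem in $a_1,a_n$ subject to $a_1^2+a_n^2\le 1$, which can then be maximized numerically as pursued in the subsequent section.
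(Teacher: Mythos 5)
Your proof is correct, but it takes a genuinely different route from the paper's. The paper bounds each $z_j$ \emph{individually} by $z_j^u=\max\lvert a_1a_n^*e^{-\mathrm ik\theta_j}+a_1^*a_ne^{\mathrm ik\theta_j}\pm(1-|a_1|^2-|a_n|^2)\rvert$ (so the relative phase of $a_1,a_n$ survives as a third search variable) and then must justify substituting $z_j^u$ for $|z_j|$ inside the modulus of a signed complex sum, $\max_{\mathrm{sign}}\lvert\sum_j\mathrm{sign}_jz_j^ue^{\mathrm i(N-k)\theta_j}\rvert/C$; that substitution is not a triangle-inequality step (enlarging one summand of a complex sum can shrink the modulus), and the paper devotes Appendix A to it, paying a $2^{|S|}$ search over sign vectors in Algorithm 1. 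Your interchange of the $b$- and $j$-sums sidesteps both the appendix lemma and the sign search: the corner phases $\Phi_{1^k}=|S|$ and $\Phi_{0^k}=\sum_{j\in S}e^{2\mathrm i(N-k)\theta_j}$ are exact, the bulk bound via $|\Phi_b|\le|S|$, $|\alpha_b\alpha_{\neg b}|\le\tfrac12(|\alpha_b|^2+|\alpha_{\neg b}|^2)$ and the bijection $b\mapsto\neg b$ is clean, and monotonicity of $\lambda_{N-k}$ in $|F_{z,N-k}|$ lifts everything; moreover the relative phase drops out, leaving only two real variables. The price is that your $F_{z,N-k}^u$ is a \emph{different function} from the paper's: you keep coherence over $j$ only in the $b=0^k$ corner and bound the rest incoherently, while the paper keeps the full phase structure over $j$ but bounds the bulk of each $z_j$ with a $\pm$ ambiguity; neither bound dominates the other in general (they coincide for $|S|=1$), so your $\alpha_S^u$, and hence the optimal $S$, $C_{\mathrm{opt}}$ and the tolerable-noise tables, would differ numerically from the paper's. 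Two cosmetic points: ``choosing the phases to align'' is unnecessary for the corner bound (the triangle inequality already gives $a_1a_n(|S|+|\Phi_{0^k}|)/C$), and the monotonicity check in $x$ and $y$ is moot since $x=a_1^2$ and $y=a_n^2$ hold exactly rather than as bounds.
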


\begin{proof}
From the definition and the above lemma, we have
\begin{equation}
\begin{aligned}
\alpha_S=\max_{x,y,k}\lambda_{N-k}.
\end{aligned}
\end{equation}
In the expression of $\lambda_{N-k}$, the nontrivial part is $|F_{z,N-k}|$, which is the sum of $z_j$ with a phase. At first, we need to bound $z_j$.
Let $\ket{\phi_A}=(a_1,a_2,\cdots,a_n)^{\mathrm{T}}$ be a normalized state where $\mathrm{T}$ denotes the conjugate transpose and $n=2^k$, then we rewrite $z_j$ as follows,
\begin{equation}
\begin{aligned}
	z_j&=\bra{\phi_A}M_{\theta_j}^{\otimes{k}}\ket{\phi_A}\\
	&=a_1a_n^*e^{-\mathrm ik\theta_j}+a_1^*a_ne^{\mathrm ik\theta_j}+\sum_{q=2}^{n-1} a_q^* a_{n+1-q}\cdot e^{\mathrm it_q\theta_j},
\end{aligned}
\end{equation}
where $t_q\in[-k+2,k-2]$. The last term is a real value and satisfies
\begin{equation}
\begin{aligned}
	\sum_{q=2}^{n-1} a_q^* a_{n+1-q}\cdot e^{\mathrm it_q\theta_j}
	&\le \sum_{q=2}^{n-1} |a_qa_{n+1-q}| \\
	&\le \sum_{q=2}^{n-1} |a_q|^2 \\
	&= 1-|a_1|^2-|a_n|^2.
\end{aligned}
\end{equation}
Similarly, we also have
%\begin{equation*}
	$\sum_{q=2}^{n-1} a_q^* a_{n+1-q}\cdot e^{\mathrm it_q\theta_j}\ge -(1-|a_1|^2-|a_n|^2).$
%\end{equation*}
Therefore we define
\begin{equation}
\begin{aligned}
z_j^u=\max|a_1a_n^*e^{-\mathrm ik\theta_j}+a_1^*a_ne^{\mathrm ik\theta_j}\pm(1-|a_1|^2-|a_n|^2)|
\end{aligned}
\end{equation}
and the absolute value for $z_j$ satisfies $|z_j|\le z_j^u$.
Thus for $F_{z,N-k}$, we have
\begin{equation}
\begin{aligned}\label{Fupperbound}
	|F_{z,N-k}|&=|\sum_{j\in S} (-1)^j z_j e^{\mathrm i(N-k)\theta_j}|/C\\
	&\le \max_{\mathrm{sign}}|\sum_{j\in S} \mathrm{sign}_j |z_j| e^{\mathrm i(N-k)\theta_j}|/C\\
	&\le \max_{\mathrm{sign}}|\sum_{j\in S} \mathrm{sign}_j z_j^u e^{\mathrm i(N-k)\theta_j}|/C,
\end{aligned}
\end{equation}% see whether z_j can be + or - then determine the sign?
where $\mathrm{sign}=(\mathrm{sign}_j)_{j\in S}$ is a vector with $|S|$ elements taking values $\pm1$ on each element and the last inequality is proved in the Appendix~\ref{lemmas}.

Consequently, define
\begin{equation}
\begin{aligned}
F_{z,N-k}^u= \max_{\mathrm{sign}}|\sum_{j\in S} \mathrm{sign}_j z_j^u e^{\mathrm i(N-k)\theta_j}|/C
\end{aligned}
\end{equation} as the upper bound for $|F_{z,N-k}|$ and
\begin{equation}\label{eq:lambdau}
\begin{aligned}
\lambda^u(a_1,a_n,k)=\frac{x+y}{2}+\sqrt{(F_{z,N-k}^u)^2+\frac{(x-y)^2}{4}},
\end{aligned}
\end{equation}
is the upper bound for eigenvalues. Note that $x=|a_1|^2,\ y=|a_n|^2$, therefore $\lambda^u$ relies on two searching complex variables $a_1,a_n$ rather than the entire space $\mathcal{H}_{\{k\}}$. This reduces the searching space and leads to an upper bound for $f_{\{k\}\{N-k\}}$,
\begin{equation}
f_{\{k\}\{N-k\}}\le \max_{a_1,a_n} \lambda^u(a_1,a_n).
\end{equation}
By searching over the variable $k$, we can compute an upper bound for $\alpha_S$,
\[\alpha_S\le \alpha_S^u=\max_k\max_{a_1,a_n} \lambda^u(a_1,a_n).\]
\end{proof}

With the above theorem, we have transformed the original problem to a solvable optimization problem with only three variables $k,a_1,a_n$. Considering there are only three variables $k,a_1,a_n$,  $\alpha_S^u$ can be efficiently calculated in a numerical way with computational cost $O(N)$. The detailed algorithm for this optimization is shown in Algorithm~\ref{Algorithm}

%\begin{algorithm}[H]
%\begin{algorithmic}[1]
%	\For{$s\gets1\ldots N$}
%	\State Compute and store a $2^{s}\times s$ matrix $A^s$, whose rows are all possible $s$-length vectors taking values $\pm1$
%	\EndFor
%\end{algorithmic}
%	\caption{Initialization.}
%\end{algorithm}

\begin{algorithm}[H]
% \KwIn{$S,C$ and step size $\varepsilon$}
% \KwOut{$\alpha_S$ }
\begin{algorithmic}[1]
	\For{$s\gets1\ldots N$}
	\State Compute and store a $2^{s}\times s$ matrix $A^s$, whose rows are all possible $s$-length vectors taking values $\pm1$
	\EndFor
	\State Compute $\theta_j$ for all $j\in S$ and form a $|S|$-length vector $\vec \theta$
	\For{$k\gets1\ldots N-1$}
	\For{$\alpha\gets0\ldots \pi/2$ with step $\varepsilon$}
	\State $|a_1|\gets\cos\alpha$
	\For{$\beta\gets0\ldots \pi/2$ with step $\varepsilon$}
	\State $|a_n|\gets\sin\alpha\cos\beta$
	\For{$\phi\gets0\ldots \pi$ with step $\varepsilon$}
	\State $\vec{z^u}\gets\max|2|a_1||a_n|\cos(\phi-k\times \vec\theta)\pm (1-|a_1|^2-|a_n|^2)|$
	\State $\alpha_S\gets\max|A^{|S|}\cdot(\vec{z^u}\odot e^{\mathrm i(N-k)\vec\theta})|/C$
	\State Compute $\lambda^u$ using $\alpha_S$, record it
	\EndFor
	\EndFor
	\EndFor
	\EndFor
	\State \textbf{return} $\alpha_S\gets \max \lambda^u$ in records
\end{algorithmic}
	\caption{Compute upper bound for $\alpha_S$ with inputs $S,C$ and step size $\varepsilon$ and output $\alpha_S$. $\odot$ is pairwise multiplication, $\cdot$ is matrix multiplication}\label{Algorithm}
\end{algorithm}

\section{Robustness against efficiency}\label{Sec:robust}

In this section, we apply the result in the previous sections to search for the optimal measurement settings for a given number of LMSs $|S|$, and study about the relationship between robustness and efficiency.

In practice, a generated entangled state is usually not the target pure state $\ket{\mathrm{GHZ}}$, but a mixture of the target pure state with noises. For simplicity, we could regard the noise as white noise, thus the generated state is
\begin{equation}
\rho=(1-p)\ket{\mathrm{GHZ}}\bra{\mathrm{GHZ}}+p\mathbb{I}/2^N.
\end{equation}
%Robustness denotes the maximal fraction of white noise conditioned on that the mixture can still be detected by this witness. On
%The robustness for this witness can also be quantified by the white noise model.
Applying the witness $\mathcal W= \alpha_S^u \mathbb{I}- \mathcal M_S$, a mixed state $\rho$ will lead to
\begin{equation}
\tr(\mathcal W\rho)=\alpha_S^u-(1-p)(1+|S|/C)-p/2^{N-1}.
\end{equation}
When $N$ tends to infinity, the maximal tolerable white noise parameter $p$ is roughly $1-\alpha_S/(1+|S|/C)$. Here, we regard the robust witness which can tolerate the maximal white noise as the optimal one. Thus for a given number of LMSs $|S|$, we need to compare the tolerable white noise for each selection of measurement settings and find the optimal one.

In order to find the optimal choice of measurement settings, we numerically calculate the corresponding $\alpha_S^u$ and obtain the tolerable white noise for every possible $S$ with $N=2,\dots,15$. We take $N=5,10,15$ as examples to show our results in details. The maximal tolerable white noise for different $|S|$ is shown in Fig.~\ref{Fig:res}.
%For a fixed arbitrary measurement setting $S$ and a weighting parameter $C$ and suppose the step size of the numerical searching is $\varepsilon$. Then searching over variable $k$ takes $N$ time. Searching over $a_1,a_n$ needs $\pi^3/4\varepsilon^3$ time. And the computation of $\lambda^u$ takes $O(|S|2^{|S|})$ time. However, $|S|$ can be deemed as a constant. Hence total time complexity is $O(N/{\varepsilon^3})$.
%Thus it is an efficient way to estimate the upper bound for $\alpha_S$.
%Substituting $\alpha_S$ into Eq.~\eqref{noisy1}, we obtain the maximal tolerable white noise consequently.
%For a fixed $|S|$, the number of measurements in $\sigma_x,\sigma_y$ space, according to the calculated $\alpha_S$, we find the most robust witness, that is to say this witness could tolerate the most white noise.

%\begin{figure*}[t]\centering
%\subfigure[]{\includegraphics[width=0.68\columnwidth]{ResNew_5.pdf}}
%\subfigure[]{\includegraphics[width=0.68\columnwidth]{ResNew_10.pdf}}
%\subfigure[]{\includegraphics[width=0.68\columnwidth]{ResNew_15.pdf}}
%\caption{The maximal tolerable white noise $p$ versus the number of LMSs for (a) $N=5$ (b) $N=10$ (c) $N=15$} \label{Fig:res}
%\end{figure*}

\begin{figure*}[t]\centering 
\subfigure[]{\includegraphics[width=0.85\columnwidth]{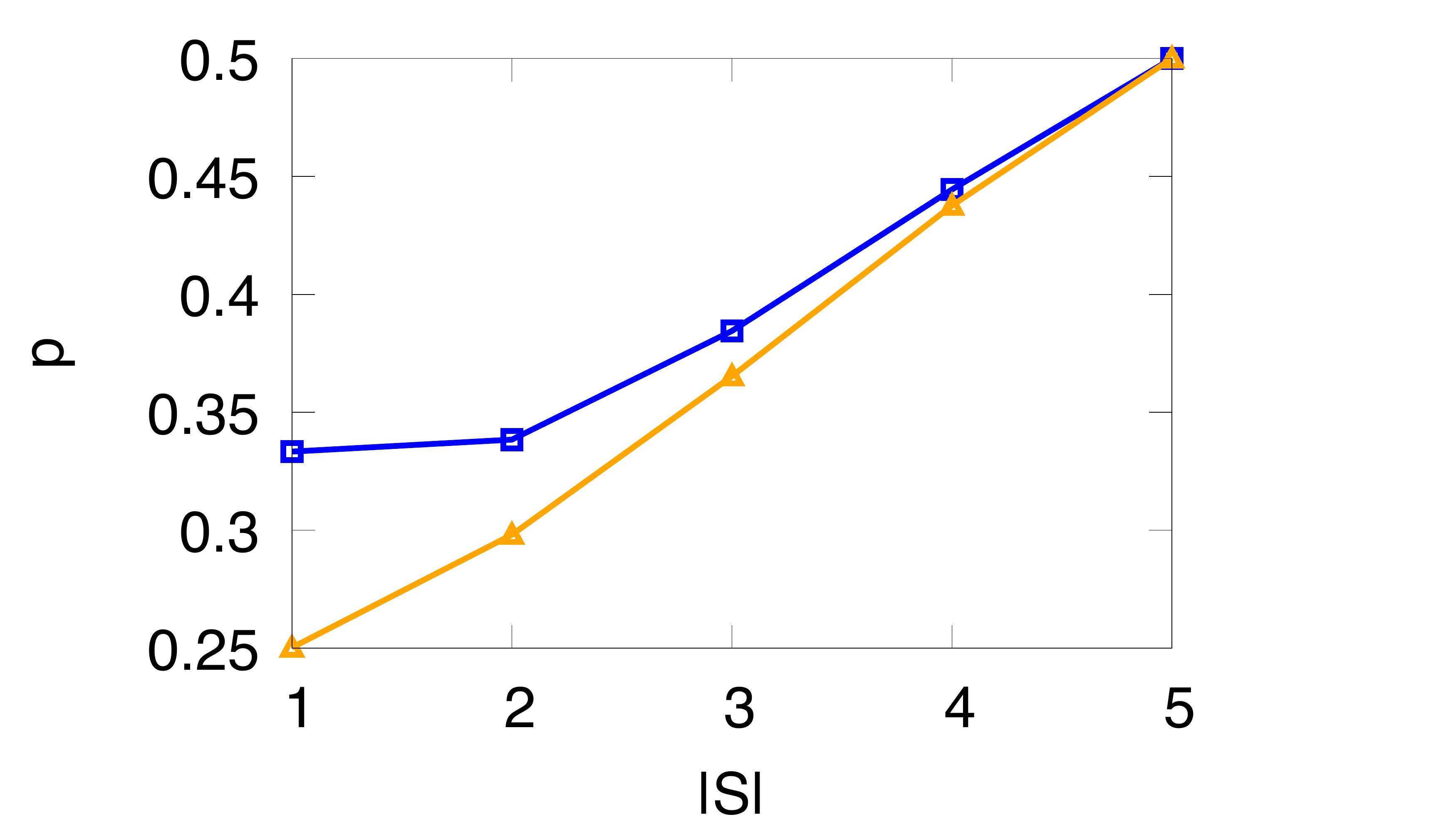}}\hspace{-8em}
\subfigure[]{\includegraphics[width=0.85\columnwidth]{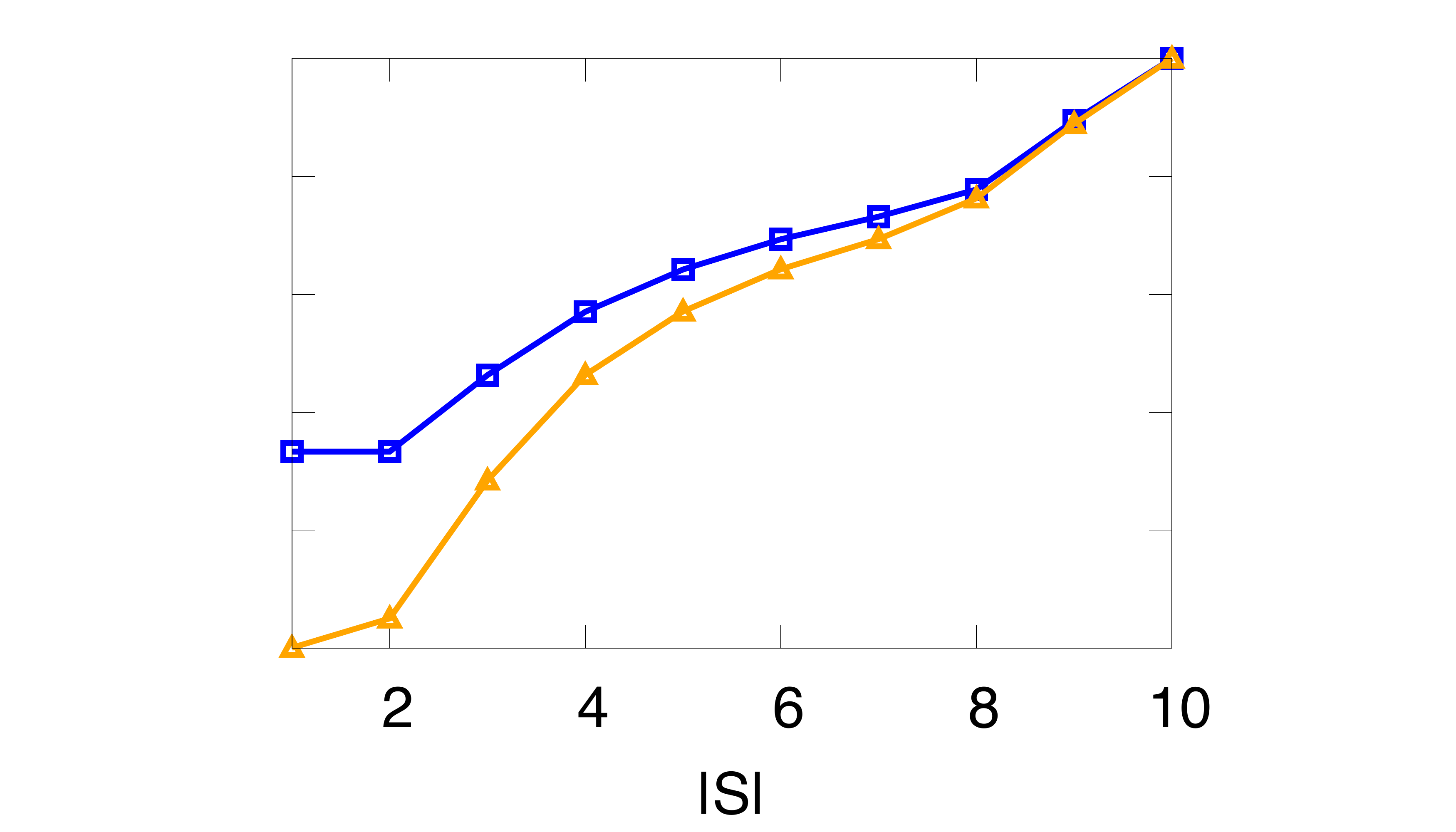}}\hspace{-8em}
\subfigure[]{\includegraphics[width=0.85\columnwidth]{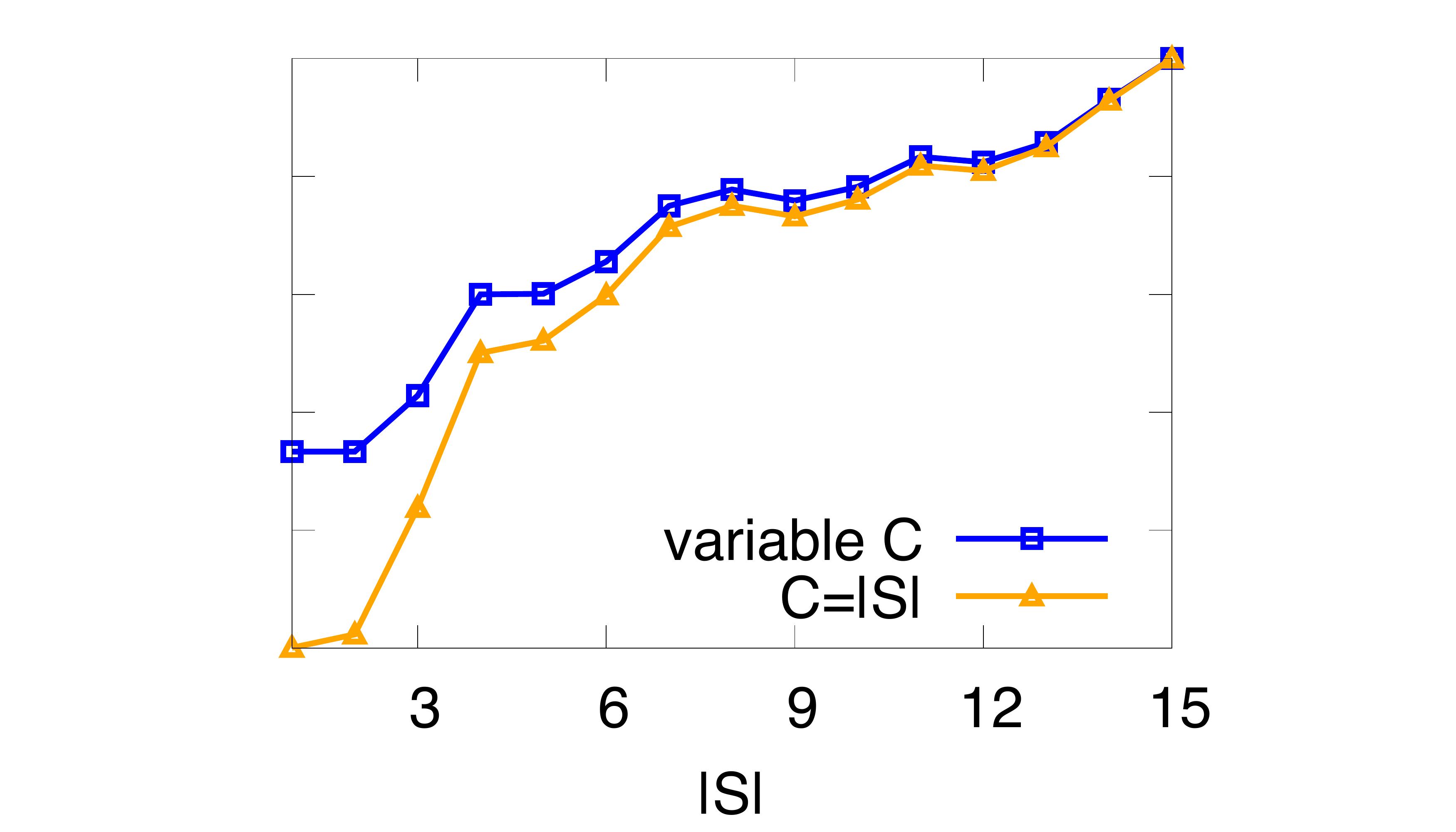}} 
\caption{The maximal tolerable white noise $p$ versus the number of LMSs for (a) $N=5$ (b) $N=10$ (c) $N=15$} \label{Fig:res} 
\end{figure*}

There is still another parameter $C$ which needs to be optimized. Here, we show two different cases, an optimized $C$ and a fixed $C=|S|$ . For ``variable $C$'' in Eq.~\eqref{measure}, we optimize $C\in[|S|,2|S|]$ (blue line in our figures) whereas for ``$C=|S|$'' we fix parameter $|C|$ (orange line in our figures). To be more specific, in order to give an instruction for the related experiments, we also list the optimal choice of measurement settings  and optimal $C$ for different $|S|$, $N=5,10,15$ in Table \ref{tableN5},~\ref{tableN10},~\ref{tableN15}, respectively. In these tables, the measurement angle $[j]$ denotes the local observable $M_{\theta_j}$ $(j=0,1\dots,N-1)$ in Eq.~\eqref{measure}.

\begin{table}[bht]
\centering
\begin{tabular}{cccc}
\hline
 $|S|$  & $p$ & Measurement angles & $C_{\textrm{opt}}$
\\ \hline
1& 0.333 & [0] & 2\\
2 & 0.338 & [1, 4]&3\\
3 & 0.384 & [1, 2, 3] & 4\\
4 & 0.444 &[0, 1, 2, 3]&5\\
5  & 0.500 & [0, 1, 2, 3, 4]	 &5 \\
\hline
\end{tabular}
\caption{Summary of optimal choice of measurement settings for for $N=5$. $C_{\textrm{opt}}$ is the optimal $C$.}\label{tableN5}
\end{table}

\begin{table}[tbh]\label{tableN10}
\centering
\begin{tabular}{cccc}
\hline
 $|S|$  & $p$ & Measurement angles & $C_{\textrm{opt}}$
\\ \hline
1	&0.333 &	[0]&	2\\
2	&0.333&	[0, 1]	&4\\
3	&0.366&	[1, 3, 4]	&5\\
4	&0.393&	[0, 1, 2, 7]	&6\\
5	&0.411&	[1, 2, 4, 5, 6]	&7\\
6	&0.423&	[1, 2, 3, 4, 6, 7]	&8\\
7	&0.433&	[1, 2, 3, 4, 5, 6, 8]	&9\\
8	&0.444&	[0, 1, 2, 3, 4, 5, 6, 7]	&10\\
9	&0.474&	[0, 1, 2, 3, 4, 5, 6, 7, 8]&	10\\
10	&0.500	&[0, 1, 2, 3, 4, 5, 6, 7, 8, 9]	&10\\
\hline
\end{tabular}
\caption{Summary of optimal choice of measurement settings for $N=10$. $C_{\textrm{opt}}$ is the optimal $C$.}\label{tableN10}
\end{table}

\begin{table}[htb]\label{tableN15}
\centering
\begin{tabular}{cccc}
\hline
 $|S|$  & $p$ & Measurement anlges &  $C_{\textrm{opt}}$
\\ \hline
1	&0.333&	[0]&	2\\
2	&0.333&	[0, 10]	&4\\
3	&0.357&	[1, 4, 5]	&5\\
4	&0.400&	[0, 1, 3, 7]&	6\\
5	&0.400&	[0, 1, 2, 7, 13]&	7\\
6	&0.414&	[1, 2, 4, 5, 10, 12]&	8\\
7	&0.438&	[0, 1, 2, 4, 5, 8, 10]&	9\\
8	&0.444&	[0, 1, 2, 3, 5, 7, 8, 11]&	10\\
9	&0.440&	[1, 2, 3, 4, 5, 7, 8, 10, 12]&	11\\
10	&0.446&	[0, 1, 2, 3, 4, 6, 7, 9, 10, 14]	&12\\
11	&0.458&	[0, 1, 2, 3, 4, 5, 6, 8, 9, 10, 12]	&13\\
12	&0.456&	[0, 1, 2, 3, 4, 5, 6, 9, 10, 12, 13, 14]&	14\\
13	&0.464&	[0, 1, 2, 3, 4, 5, 6, 7, 10, 11, 12, 13, 14]&	15\\
14	&0.483&	[0, 1, 2, 3, 4, 5, 6, 7, 8, 10, 11, 12, 13, 14]&	15\\
15	&0.500&	[0, 1, 2, 3, 4, 5, 6, 7, 8, 9, 10, 11, 12, 13, 14]&	15\\
\hline
\end{tabular}
\caption{Summary of optimal choice of measurement settings for $N=15$. $C_{\textrm{opt}}$ is the optimal $C$.}\label{tableN15}
\end{table}

In practice, it is not hard to experimentally calibrate the white noise level $p$ for a generated state. They could find the most efficient witness for a given $p$ according to these tables. Moreover, when $|S|=1$, the obtained optimal measurement settings is $M_{\theta_0}=\sigma_x$, which is consistent with the witnesses proposed in previous works \cite{PhysRevLett.94.060501,PhysRevLett.117.210504}.

\section{Discussion} \label{Sec:Conclusion}
In this work, we study the genuine entanglement detection of GHZ-like states. With different LMSs, we propose a family of EWs that can tolerate different levels of white noises. We optimize the number of measurement settings for a given noise rate. As the GHZ-like state is an important multipartite entangled state, our results can be directly applied in practice, with realistic state preparation. As the number of partitions $N$ increases, completely performing $N$ measurement settings for an entanglement witness cost too much efforts. Our work solve this practical problem and provide an alternative way of entanglement witness. In practice, a suitable witness can be chosen according to the noise level of the practical system. Furthermore, the proposed optimal witness is also helpful to improve the performance of measurement-device-independent EW \cite{Branciard13,Yuan14,
nawareg2015experimental,Buscemi12,Zhao16,PhysRevA.93.042317}.

The technique used in this work still has the room for improvement. For instance, in finding more efficient way to search for the optimal witness and in estimating $\alpha_S^u$ with a higher accuracy.
While searching for the optimal choice of measurements, we find that many choices have the same $\alpha_S^u$ and tolerable noise, but for simplicity we only list one of the optimal choices in the tables. Because the symmetry of the $\sigma_x-\sigma_y$ plane, there exists the simplification for searching optimal $S$ and we give the equivalence of $S$ in Appendix \ref{simplication}. The property of the symmetry of $\sigma_x-\sigma_y$ plane could be applied more systematically to simplify the searching procedure.

Note that from the figures, roughly speaking, we could see that more measurement settings $|S|$ will lead to a more robust witness. However, it is not strict because the tolerable noise $p$ is not monotonically increasing with $|S|$. Sometimes, more measurement settings will not have benefits. This stems from the fact that our estimated upper bound $\alpha_S^u$ is not always tight. The calculation of an accurate $\alpha_S$ in an analytical way may be helpful for improving the performance and reducing the complexity for finding the optimal choice.

Moreover, the techniques proposed in this work can be extended to other symmetric states \cite{zhou19symmetric} or high dimension GHZ-like state. Constructing genuine entanglement witness family in other forms is also an interesting direction. The other types of genuinely entangled states, such as graph states \cite{hein2004multiparty} and Dicke state \cite{dicke1954coherence}, may require a different method of constructing witnesses. Besides the traditional entanglement witness, the similar technique can also be generalized to other entanglement detection tasks, i.e., entanglement structure detection \cite{PhysRevX.8.021072}.

\section*{ACKNOWLEDGMENTS}
We acknowledge Yeong-Cherng Liang for the insightful discussions. This work was supported by the National Natural Science Foundation of China Grants No.~11875173 and No.~11674193, and the National Key R\&D Program of China Grant No.~2017YFA0303900. X.Y.~acknowledge EPSRC grant EP/M013243/1.
%This work was supported by the National Key R\&D Program of China Grants No.~2017YFA0303900 and No.~2017YFA0304004, the National Natural Science Foundation of China Grants No.~11674193 and 11875173.
	%For various hyper-parameters $C$ and $\mathcal S$, the goal is to minimize the ratio $\bar f\big/(1+\frac SC)$. Formally, given hyper-parameter $S$, \[\min_{C,\mathcal S} \bar f\big/(1+\frac SC),\] which can also be computed in a numerical way.

%%%%%%%%%%%%%%%%%%%%%%%%%%%%%%%%%%%%%%%%%%%%%%%%%%%%%%%%%
\appendix
\section{Proof of Eq.~\eqref{Fupperbound}}\label{lemmas}
	Here we regard the complex numbers as two-dimensional vectors in the complex plane. We use $\inn{z_1}{z_2}$ to denote the inner product for two complex numbers $z_1,z_2$. That is,
\begin{equation}
    	\inn{z_1}{z_2}=|z_1||z_2|\cos(\phi_1-\phi_2)=x_1x_2+y_1y_2
\end{equation}
 where $z_j=|z_j| e^{\mathrm i\phi_j}=x_j+\mathrm i y_j,\ j=1,2$.
	
	For a fixed set of indices $S$ and any $j\in S$, let $\mathrm{sign}_j\in\{\pm1\}$ and $\mathrm{sign}=(\mathrm{sign}_j)_{j\in S}$ be a bivalent vector  . Let vectors (complex values) $z_j=|z_j| e^{\mathrm i\phi_j}$, whose modules have bound $|z_j|\le z_j^u$. Define the sum $H_{\mathrm{sign}}=\sum_{j\in S} \mathrm{sign}_j z_j$. And let $\mathrm{sign}^*$ be the vector that maximize the module of $H$. Then $H_{\mathrm{sign}^*}$ is the maximal $H$ and we have the following lemma.
	\begin{lemma}
		$\inn{\mathrm{sign}^*_j z_j}{H_{\mathrm{sign}^*}}\ge0$ for all $j\in S$.
		\label{lem1}
	\end{lemma}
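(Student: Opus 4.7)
The plan is to prove this by a local-swap contradiction: the maximizer $\mathrm{sign}^*$ cannot admit a single-coordinate flip that strictly increases $|H|$, and that constraint is equivalent to the claimed non-negativity.

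Concretely, I would suppose for contradiction that there is some $j_0\in S$ with $\inn{\mathrm{sign}^*_{j_0} z_{j_0}}{H_{\mathrm{sign}^*}}<0$. Define $\mathrm{sign}'$ to coincide with $\mathrm{sign}^*$ on every $j\neq j_0$ and satisfy $\mathrm{sign}'_{j_0}=-\mathrm{sign}^*_{j_0}$. Then directly from the definition of $H$,
\begin{equation}
H_{\mathrm{sign}'}=H_{\mathrm{sign}^*}-2\,\mathrm{sign}^*_{j_0}\,z_{j_0}.
\end{equation}

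Next I would expand the squared modulus using the real inner product on $\mathbb{C}\cong\mathbb{R}^2$ introduced just above the lemma:
\begin{equation}
|H_{\mathrm{sign}'}|^2=|H_{\mathrm{sign}^*}|^2-4\,\inn{\mathrm{sign}^*_{j_0} z_{j_0}}{H_{\mathrm{sign}^*}}+4|z_{j_0}|^2.
\end{equation}
Under the contradiction hypothesis the middle term is strictly positive, and the last term is non-negative, so $|H_{\mathrm{sign}'}|^2>|H_{\mathrm{sign}^*}|^2$. This contradicts the defining maximality of $\mathrm{sign}^*$ among all $\pm 1$ sign vectors, so no such $j_0$ can exist and the lemma follows.

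There is no genuine obstacle here; the argument is a one-step exchange lemma. The only care needed is to use the correct bilinearity property, namely $|u+v|^2=|u|^2+2\inn{u}{v}+|v|^2$ for the real inner product on $\mathbb{C}$, and to note that flipping $\mathrm{sign}^*_{j_0}$ produces a legitimate competitor vector so that $|H_{\mathrm{sign}'}|\le |H_{\mathrm{sign}^*}|$ must hold. The $+4|z_{j_0}|^2$ term even gives us slack, so the conclusion $\ge 0$ (rather than $>0$) is the sharp statement one should expect.
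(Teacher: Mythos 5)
Your argument is exactly the paper's proof: the same single-coordinate sign flip, the same identity $H_{\mathrm{sign}'}=H_{\mathrm{sign}^*}-2\,\mathrm{sign}^*_{j_0}z_{j_0}$, and the same expansion of the squared modulus yielding a strict increase that contradicts maximality. It is correct and there is nothing to add.
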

\begin{proof}
	We prove this lemma by contradiction. Assume there exists $i\in S$ such that $\inn{\mathrm{sign}^*_i z_i}{H_{\mathrm{sign}^*}}<0$, then define $\mathrm{sign}'$ as follows,
\begin{equation}
   \mathrm{sign}'_j=\left\{
		\begin{array}{lr}
			-\mathrm{sign}^*_j&j=i\\
			\mathrm{sign}^*_j&j\neq i
		\end{array}\right.
\end{equation}		
which means $H_{\mathrm{sign}'}=\sum_{j\in S}\mathrm{sign}'_j z_j=H_{\mathrm{sign}^*}-2\mathrm{sign}^*_i z_i$. Then
\begin{equation}
\begin{aligned}
&\inn{H_{\mathrm{sign}'}}{ H_{\mathrm{sign}'}}\\
			=&\inn{H_{\mathrm{sign}^*}}{ H_{\mathrm{sign}^*}}-4\inn{\mathrm{sign}^*_i z_i}{H_{\mathrm{sign}^*}}+4 \inn{\mathrm{sign}^*_i z_i}{\mathrm{sign}^*_i z_i}\\
			>&\inn{H_{\mathrm{sign}^*}}{ H_{\mathrm{sign}^*}}.
	    \end{aligned}
	\end{equation}

		It follows that $|H_{\mathrm{sign}'}|^2>|H_{\mathrm{sign}^*}|^2$. And it contradicts to the definition of $\mathrm{sign}^*$.
\end{proof}
	Applying this lemma we can prove another lemma which uses the upper bound of complex values $z_j$ to bound their sum $H_{\mathrm{sign}}$. Define $H_{\mathrm{sign}}^u=\sum_{j\in S} \mathrm{sign}_j z_j^u e^{\mathrm i\phi_j}$ and $\mathrm{sign}^{u}$ is the vector that
maximize the module of $H_{\mathrm{sign}}^u$. Then we have the next lemma.
	\begin{lemma}
		$|H_{\mathrm{sign}^*}|\le|H_{\mathrm{sign}^u}^u|$, that is,
\begin{equation}
    \max_{\mathrm{sign}}|\sum_{j\in S} \mathrm{sign}_j |z_j| e^{\mathrm i\phi_j}|\le \max_{\mathrm{sign}}|\sum_{j\in S} \mathrm{sign}_j z_j^u e^{\mathrm i\phi_j}|.
\end{equation}
		\label{lem2}
	\end{lemma}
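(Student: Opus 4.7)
The plan is to prove Lemma~\ref{lem2} by leveraging Lemma~\ref{lem1} through a single interpolating quantity: the sum built from the upper bounds $z_j^u$ but with the sign vector $\mathrm{sign}^*$ that is optimal for the original $|z_j|$. If we can show that this interpolating sum already has module at least $|H_{\mathrm{sign}^*}|$, then since $\mathrm{sign}^u$ only does better on the $z_j^u$-problem by definition, we are done.

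Concretely, I would first extract from Lemma~\ref{lem1} the useful inequality. Since $\inn{\mathrm{sign}^*_j z_j}{H_{\mathrm{sign}^*}} \ge 0$ with $z_j = |z_j|e^{\mathrm{i}\phi_j}$ and $|z_j| \ge 0$, dividing out $|z_j|$ (the case $|z_j| = 0$ being trivial) gives
\begin{equation*}
\inn{\mathrm{sign}^*_j e^{\mathrm{i}\phi_j}}{H_{\mathrm{sign}^*}} \ge 0 \quad \text{for all } j \in S.
\end{equation*}
Next I would introduce
\begin{equation*}
\tilde H \;=\; \sum_{j\in S} \mathrm{sign}^*_j\, z_j^u\, e^{\mathrm{i}\phi_j},
\end{equation*}
and write $\tilde H = H_{\mathrm{sign}^*} + \sum_j \mathrm{sign}^*_j(z_j^u - |z_j|) e^{\mathrm{i}\phi_j}$. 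Taking the inner product with $H_{\mathrm{sign}^*}$ and using bilinearity of $\inn{\cdot}{\cdot}$,
\begin{equation*}
\inn{\tilde H}{H_{\mathrm{sign}^*}} \;=\; |H_{\mathrm{sign}^*}|^2 + \sum_{j\in S} (z_j^u - |z_j|)\, \inn{\mathrm{sign}^*_j e^{\mathrm{i}\phi_j}}{H_{\mathrm{sign}^*}}.
\end{equation*}
Each summand is nonnegative by the previous display combined with $z_j^u \ge |z_j|$, so $\inn{\tilde H}{H_{\mathrm{sign}^*}} \ge |H_{\mathrm{sign}^*}|^2$.

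The Cauchy–Schwarz inequality for the real inner product on the complex plane then yields $|\tilde H|\,|H_{\mathrm{sign}^*}| \ge |H_{\mathrm{sign}^*}|^2$, hence $|\tilde H| \ge |H_{\mathrm{sign}^*}|$. Finally, by the definition of $\mathrm{sign}^u$ as the maximizer of $|H_{\mathrm{sign}}^u|$,
\begin{equation*}
|H_{\mathrm{sign}^u}^u| \;\ge\; |\tilde H| \;\ge\; |H_{\mathrm{sign}^*}|,
\end{equation*}
which is exactly the claim. I do not anticipate a real obstacle: the subtlety is just recognizing that Lemma~\ref{lem1} supplies precisely the sign information needed so that the perturbation from $|z_j|$ to $z_j^u$ contributes nonnegatively in the inner product with the reference vector $H_{\mathrm{sign}^*}$. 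The only care required is to handle the degenerate cases $|z_j| = 0$ or $H_{\mathrm{sign}^*} = 0$ (both trivial: in the latter, $|H_{\mathrm{sign}^*}| = 0 \le |H_{\mathrm{sign}^u}^u|$ is immediate).
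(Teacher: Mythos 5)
Your proof is correct and follows essentially the same route as the paper: both decompose $H^u_{\mathrm{sign}^*}$ (your $\tilde H$) as $H_{\mathrm{sign}^*}$ plus a perturbation whose terms point in the directions $\mathrm{sign}^*_j e^{\mathrm{i}\phi_j}$, and both invoke Lemma~\ref{lem1} to show this perturbation has nonnegative inner product with $H_{\mathrm{sign}^*}$. The only cosmetic difference is that you finish with Cauchy--Schwarz while the paper expands $|H^u_{\mathrm{sign}^*}|^2$ directly as a quadratic; both are valid.
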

	\begin{proof}
		Denote $\Delta z_j=z_j^u e^{\mathrm i\phi_j}-z_j=(z_j^u-|z_j|)e^{\mathrm i\phi_j}$ and $\Delta H =H_{\mathrm{sign}^*}^u-H_{\mathrm{sign}^*}=\sum_{j\in S}\mathrm{sign}_j^*\Delta z_j$. Observe that $\inn{\mathrm{sign}_j^*\Delta z_j}{H_{\mathrm{sign}^*}}\ge 0$ for all $j\in S$ since we have $\inn{\mathrm{sign}^*_j z_j}{H_{\mathrm{sign}^*}}\ge 0$ by lemma~\ref{lem1} and $z_j,\ \Delta z_j$ have the same direction or $\Delta z_j=0$. Hence $\inn{\Delta H}{H_{\mathrm{sign}^*}}\ge0$. It follows that
\begin{equation}
\begin{aligned}
&\inn{H_{\mathrm{sign}^*}^u}{H_{\mathrm{sign}^*}^u}\\
				=&\inn{H_{\mathrm{sign}^*}}{ H_{\mathrm{sign}^*}}+2\inn{\Delta H}{H_{\mathrm{sign}^*}}+\inn{\Delta H}{\Delta H}\\
				\ge& \inn{H_{\mathrm{sign}^*}}{ H_{\mathrm{sign}^*}}.
		\end{aligned}
	\end{equation}
Hence
\begin{equation}
|H_{\mathrm{sign}^*}|\le|H_{\mathrm{sign}^*}^u|\le|H_{\mathrm{sign}^u}^u|.
\end{equation}
	\end{proof}

\section{Equivalence of local measurement setting $S$ }\label{simplication}
We denote $\mathcal R_Z$ as the set of rotations by z-axis and $\Sigma$ as the set of all $S$. $U\in\mathcal R_Z$ as a fixed transformation. For two distinct sets $S,S'\in\Sigma$ with the same size $|S|$, if there exists a matching of elements in $S$ and $S'$ such that for any $(j,j'),j\in S,j'\in S'$ in this matching, we have $(-1)^{(j-j')}=1$ and $M_{\theta_{j'}}=U M_{\theta_j}U^{\dagger}.$ We call $S$ and $S'$ {\bf equivalent by} $U$.

The operators defined in Eq.~\eqref{witness3} with $S$, $S'$ are $\mathcal M_S$ and $\mathcal M_S'$. Then for any state $\ket{\phi}$, there exists a state $\ket{\phi'}$ such that $\bra{\phi}\mathcal M_S\ket{\phi}=\bra{\phi'}\mathcal M_S'\ket{\phi'}$.

	\begin{proof}
	Let $\ket{\phi'}=U\ket{\phi}$. Then for any $(j,j')$ in the matching,
	\begin{equation}
		\bra{\phi'}M_{\theta_{j'}}\ket{\phi'}=\bra{\phi}U^{\dagger} U M_{\theta_j}U^{\dagger} U\ket{\phi}=\bra{\phi}M_{\theta_j}\ket{\phi}.
	\end{equation}
	In addition, $U\in\mathcal R_Z$ so the z-axis observable remains. Thus $\bra{\phi}M\ket{\phi}=\bra{\phi'}M'\ket{\phi'}$ follows.
	\end{proof}

	%Note that because the transformation $U_g$ cannot(?) be used to define a group acting on $\Sigma$, Burnside's Lemma cannot be used here.% in each orbit of $G$ on $\Sigma$. The number of orbits can be calculated using Burnside's Lemma.

	%Example. Let $N=4$, then with $S=1$, $(\{0\},\{2\})$ and $(\{1\},\{3\})$ are equivalent. When $S=2$, we have $(\{0,1\},\{2,3\})$ equivalent, the corresponding $U_g$ is rotation of $\pi/2$. But $\{0,2\},\{0,3\},\{1,2\},\{1,3\}$ are {\bf not} equivalent to any other. etc. % which fits the definition of VALID. Then $|\Sigma|=2^N=16$ which means brute force will search for 16 rounds. However the number of orbits is $\frac1{|G|}||$

%%%%%%%%%%%%%%%%%%%%%%%%%%%%%%%%%%%%%%%%%%%%%%%%%%%%%%%%%
% \section{Algorithms}\label{appendix algo}
%Pseudocode. $N$ is fixed. E.g., $N=10$.

%\begin{algorithm}
%	For fixed $S$.
%\begin{algorithmic}[1]
%	\STATE Record$\gets\emptyset$
%	\FOR {$\mathcal S=$ Pick S from $0\ldots N-1$}
%	\FOR {$C=S\ldots 2S$}
%	\IF{MthetaN($S,\mathcal S,C$)$/(1+S/C)$ is ``better'' than Record}
%	\STATE Record$\gets$MthetaN($S,\mathcal S,C$),$\mathcal S,C$
%	\ENDIF
%	\ENDFOR
%	\ENDFOR
%	\STATE \textbf{return} Record.
%\end{algorithmic}
%	\caption{Find the best one for $S$}
%\end{algorithm}
%%%%%%%%%%%%%%%%%%%%%%%%%%%%%%%%%%%%%%%%

\bibliography{bibGME}

%%%%%%%%%%%%%%%%%%%%%%%%%%%%%%%%%%%%%%%%

%%%%%%%%%%%%%%%%%%%%%%%%%%%%%%%%%%%%%%%%%%%%%%%%%%%%%%%%%%%%%%%%%%%

\end{document}